\newtheorem{theorem}{Theorem}
\newtheorem{lemma}{Lemma}
\newtheorem{corollary}{Corollary}
\newtheorem{definition}{Definition}
\title{Communication Pattern Models: An Extension of Action Models for Dynamic-Network Distributed Systems}
\author{
Diego A.\ Vel\'azquez\thanks{Diego A.\ Vel\'azquez is a doctoral student at the Programa de Doctorado en Ciencia e Ingenier\'ia de la Computaci\'on, Universidad Nacional Aut\'onoma de M\'exico, and is the receipient of a fellowship from CONACyT.}
\email{velazquez-diego@ciencias.unam.mx}
\and
Armando Casta\~neda\thanks{Armando Casta\~neda was supported by PAPIIT project IN108720.}
\email{armando.castaneda@im.unam.mx}
\and
David A.\ Rosenblueth
\email{drosenbl@unam.mx}
\and
Universidad Nacional Aut\'onoma de M\'exico
}
\newcommand{\I}[1]{\mathit{#1}}
\newcommand{\mc}{\ensuremath{\mathit{,}}}
\newcommand{\kmf}[1]{\mathit{#1}}
\newcommand{\km}{\ensuremath{\kmf{M}}}
\newcommand{\amf}[1]{\mathit{#1}}
\newcommand{\am}{\ensuremath{\amf{A}}}
\newcommand{\cpmf}[1]{\mathit{#1}}
\newcommand{\cpm}{\ensuremath{\mathcal{P}}}
\begin{document}

\maketitle

\begin{abstract}
  Halpern and Moses were the first to recognize, in 1984, the importance of a formal treatment of knowledge in distributed computing.
  Many works in distributed computing, however, still employ informal notions of knowledge.
  Hence, it is critical to further study such formalizations.
  Action models, a significant approach to modeling dynamic epistemic logic, have only recently been applied to distributed computing,
for instance, by Goubault, Ledent, and Rajsbaum.
  Using action models for analyzing distributed-computing environments, as proposed by these authors, has drawbacks, however.
  In particular, a direct use of action models may cause such models to grow exponentially as the computation of the distributed system evolves.
Hence, our motivation is finding compact action models for distributed systems.
We introduce \emph{communication pattern models} as an extension of both ordinary action models and their update operator. 
We give a systematic construction of communication pattern models 
for a large variety of distributed-computing models called \emph{dynamic-network models}. 
For a proper subclass of dynamic-network models called \emph{oblivious},
the communication pattern model remains the same throughout the computation.
\end{abstract}


\section{Introduction}
A formal treatment of the concept of \emph{knowledge} is important yet little studied in the distributed-computing literature.
Authors in distributed computing often refer to the knowledge of the different agents or processes, 
but typically do so only \emph{informally}.
Hence, a formal basis of knowledge in distributed computing would increase the power to prove formal results.
The first step in this direction was taken by Halpern and Moses~\cite{halpern-moses-84}.
A topological approach~\cite{herlihy-kozlov-rajsbaum-14} to distributed computing resulted in a further connection~\cite{goubault-ledent-rajsbaum-18} with epistemic logic.
Such a connection uses epistemic ``action models''~\cite{baltag-lawrence-slawomir-98,van-ditmarsch-van-der-hoek-kooi-08} to capture communication between agents.
We observe, however, that the action models proposed in~\cite{goubault-ledent-rajsbaum-18} for the \emph{Iterated Immediate Shapshot} (IIS)
computing model~\cite{borowsky-gafni-93-2, borowsky-1997} not only vary at each communication round, but each action model itself is structurally isomorphic to the resulting epistemic model, which
paradoxically, requires knowing the desired result beforehand.
Our objective is to develop a different connection between 
\emph{dynamic-network models}~\cite{charron-bost-schiper09,kuhn-oshman-11,nowak-schmid-winkler-19} (which include the IIS model),
and \emph{Dynamic Epistemic Logic} (DEL)~\cite{baltag-lawrence-slawomir-98,van-ditmarsch-van-der-hoek-kooi-08},
more appropriate for computing knowledge change in these systems.
\vspace{-10pt}
\paragraph{Context.}
In a distributed system, communication is typically performed either by sending and receiving messages, or by writing to, and reading from, a shared memory.
The communication patterns (i.e., who communicated with whom) that can occur may change from model to model.
When designing and analyzing distributed systems, it is often the case that authors informally refer to what an agent ``knows'' after an agent performs some action.
There is indeed a formal connection between distributed systems and epistemic logic: 
this connection was initiated by Halpern and Moses in 1984~\cite{halpern-moses-84}, showing that distributed systems can be rigorously studied from 
an epistemic-logic viewpoint.
Roughly, a distributed protocol is studied through an \emph{epistemic model} with each of its states 
representing
a possible \emph{configuration} of the protocol.
Since its discovery, the epistemic-based approach to distributed systems has been fruitful,
as shown in the book by Fagin, Halpern, Moses, and Vardi~\cite{fagin-halpern-moses-vardi-95}.

An important connection between distributed computing and topology was discovered in three independent papers by Borowsky and Gafni~\cite{borowsky-gafni-93}, Herlihy and Shavit~\cite{herlihy-shavit-93}, and Saks and Zaharoglou~\cite{saks-zaharoglou-93} in 1993, and since then this approach has provided useful techniques to show a number of important results in this field.
The book by Herlihy, Kozlov, and Rajsbaum~\cite{herlihy-kozlov-rajsbaum-14} provides a comprehensive description of this connection.

Recently, Goubault, Ledent, and Rajsbaum have shown~\cite{goubault-ledent-rajsbaum-18} that the epistemic-based approach can be directly connected to the topology-based approach to distributed systems.
The topological approach studies a distributed protocol through its topological representation: a geometric object, called \emph{simplicial complex}, where each of its faces is associated with a \emph{configuration} of the protocol.
In essence, Goubault, Ledent, and Rajsbaum established~\cite{goubault-ledent-rajsbaum-18} a correspondence between the topological description of distributed protocols and epistemic models.

A second interesting result of these authors is that the communication patterns allowed in the IIS distributed model can also be described using epistemic-logic tools from DEL:
the communication in a distributed model can be modeled with an action model capturing the communication events that can occur, and the \emph{restricted modal product} operator shows how knowledge evolves after agents exchange information in a \emph{communication round}.
(A more thorough summary of~\cite{goubault-ledent-rajsbaum-18} appears in Section~\ref{related-work}.)



We observe that the action models of~\cite{goubault-ledent-rajsbaum-18} describing communication in the IIS model have drawbacks:
First, such action models are different for each communication round (an ideal representation of communication would not depend on the
communication rounds that have been executed so far).
In addition, the size of such action models grows exponentially as the computation develops.
Moreover, such action models are structurally isomorphic to the epistemic models we wish to compute.
The action models of~\cite{goubault-ledent-rajsbaum-18}, therefore,
not only are not useful for computing the epistemic model resulting from a communication event, but
are not a succinct representation of the communication that can happen in the IIS model.
This phenomenon is opposite to the description of communication in the topological approach, where we have a geometric and compact description of the communication in the IIS model: the communication is clearly described as a subdivision~\cite[Chapter 11]{herlihy-kozlov-rajsbaum-14}. 
\footnote{Informally, a subdivision results from dividing the faces of a geometrical object into more faces preserving its shape. }
\vspace{-10pt}
\paragraph{Contributions.}
We are interested in the following question: in the spirit of the action-model approach to DEL, is it possible to describe the communication in a distributed model in a compact manner?
As a first step, we try to salvage the approach of~\cite{goubault-ledent-rajsbaum-18}, by attempting to find an action model applicable to every communication round
for \emph{two} agents with binary inputs in the IIS model.
We exhibit a family of action models with a constant number of events, although each event is labeled with a precondition formula whose size does increase at each communication round.
For obtaining these action models, it was crucial to know \emph{in advance} the epistemic model after a communication round.
We have not been able to find a similar family for \emph{three or more} agents yet.
The case of $m$-ary inputs for $m \geq 3$ would be even harder to analyze.

The drawbacks of the action models proposed in~\cite{goubault-ledent-rajsbaum-18}, together with our unsuccessful efforts to find action models for IIS of small size, are motivations for investigating a different approach.
We hence consider an extension of action models that allows us to easily derive models of small size.
Moreover, we study not only the IIS model but also a larger class of message-passing models called 
\emph{dynamic-network models}~\cite{charron-bost-schiper09,kuhn-oshman-11,nowak-schmid-winkler-19}.
Roughly speaking, in a dynamic-network model, the agents execute infinite sequences of communication rounds.
In each round, the agents communicate according to a \emph{communication pattern} that specifies 
who communicates with whom in that round.
A proper subclass of dynamic-network models are those known as \emph{oblivious} 
that are specified with a set of communication patterns that can occur in any round, 
regardless of the communication patterns that have occurred so far in the execution.
The IIS model can be defined as an oblivious dynamic-network model.



Our main contribution is a simple but powerful extension to the existing action models and its restricted modal product. 
For every dynamic-network model, we systematically define an infinite sequence of \emph{communication pattern models} 
that represent how knowledge changes when agents communicate in
the \emph{full-information protocol},
hence making our approach amenable to be extended to automated formal verification of distributed systems. 
For the case of oblivious models, the communication pattern model remains the same all through the execution. 
Hence, we are able to model communication of oblivious dynamic-network models in constant space.

\paragraph{Structure of this paper.}
The rest of this paper is structured as follows.
Section 2 gives an overview of drawbacks arising from a straightforward use of action models in some contexts in multi-agent systems and outlines our solution to overcome such shortcomings.
After establishing notation and definitions in Section~3, we explain, in Section~4, an attempt to improve on~\cite{goubault-ledent-rajsbaum-18} within the IIS model.
Section 5 presents communication pattern models, our modification of action models.
Comparison with existing work appears in Section~6, and Section~7 concludes this paper.


\section{An Overview of Our Proposal}

We first motivate our proposal by pointing out a limitation
of action models and the restricted modal product that arises in some contexts 
when modeling the communication that can happen in a multi-agent system. 
Roughly speaking, sometimes it is impossible to have action models of ``small size''.
Our discussion here is informal as we are interested in high-level ideas at the moment,
hence delaying formal definitions for the next sections.

\vspace{-0.3cm}
\paragraph{The issue.}
Let us consider the well-known \emph{coordinated attack} problem where two agents 
$a$ and $b$ wish to schedule an attack. 
Agent $a$ has two possible preferences for scheduling the attack, {\sf n} for noon or {\sf d} for dawn,
while agent $b$ has no initial preference and wishes to learn $a$'s. 
Communication is unreliable: whenever an agent sends a message, such a message can get lost.
The epistemic model $\km$ modeling the initial situation before any communication occurs has two worlds,
one in which $a$ prefers to attack at dawn and another one in which $a$ prefers to attack at noon;
$b$ cannot distinguish between these two worlds.
See model $\km$ in Fig.~\ref{fig-coordinated-attack-identified}.

An action model is a generalization of an epistemic model, where vertices, called \emph{events}, are labeled with arbitrary formulas (as opposed to sets of propositional variables) called \emph{preconditions}.
The restricted modal product of an epistemic model $\km$ and an action model $\am$, denoted $\km \otimes \am$, is an epistemic model where each world is a pair $(w,e)$, such that $w$ is a world in $M$, $e$ is an event in $\am$, and the precondition of $e$ holds in $w$. 
Worlds $(w,e)$ and $(w',e')$ are connected with each other for agent $a$ if both $w$ and $w'$ are connected in $\km$ for $a$, and $e$ and $e'$ are connected in $\am$ for $a$.
World $(w,e)$ is labeled with the same label as that of $w$.
(Formal definitions of action model and restricted modal product appear in Subsect.~\ref{action-models}.)

A simple action model $\am$ modeling that $a$ sends its preference to $b$ has 
three events: one for each preference $p \in \{{\sf d, n}\}$ modeling that
$b$ successfully receives the preference, $p$, of $a$,
with a precondition specifying that the event can happen only if $p$ is the preference of $a$,
and a third event, modeling that $a$'s message gets lost, with precondition $\top$. 
Agent $b$ can distinguish between all events since it either receives $a$'s message or not,
but $a$ cannot distinguish between events because messages can get lost.
See action model $\am$ in Fig.~\ref{fig-coordinated-attack-identified}.
The restricted modal product
$\km \otimes \am$ contains four worlds, one for each combination of $a$'s initial preference and successful/unsuccessful communication. 

\vspace{-0.2cm}
\begin{figure}[h]
\begin{center}
\includegraphics[scale=0.65]{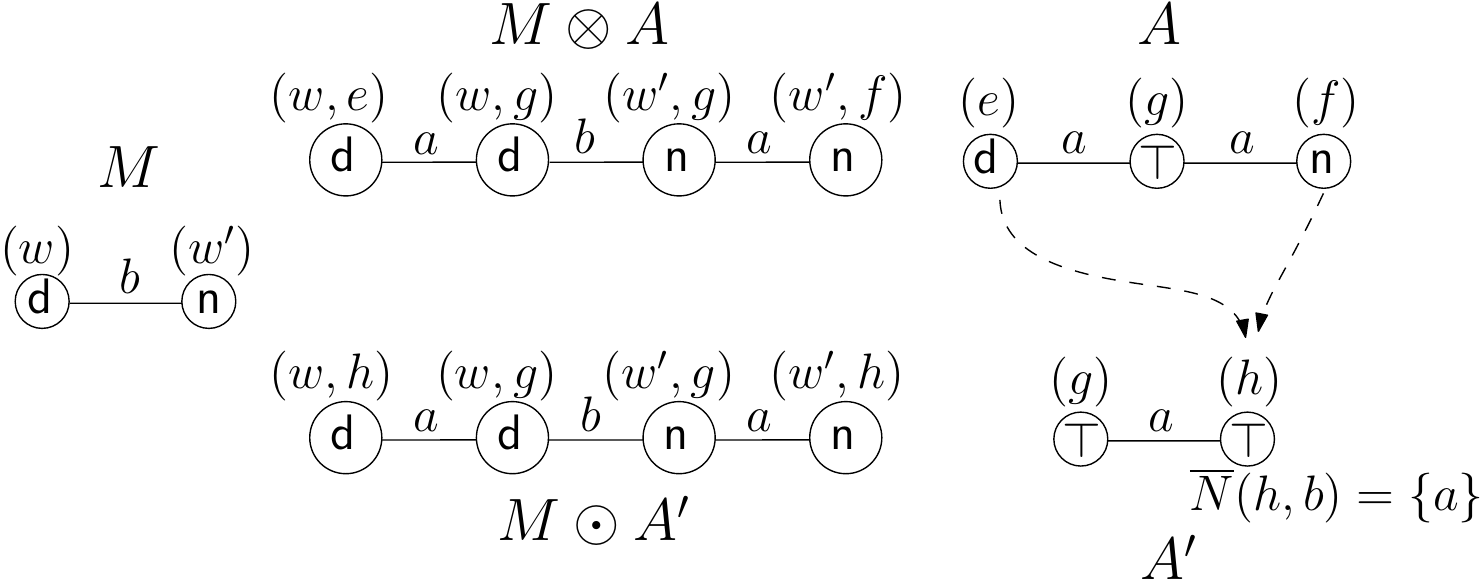}
\caption{\small A smaller action model for the coordinated attack problem using our approach.}
\label{fig-coordinated-attack-identified}
\end{center}
\end{figure}
\vspace{-0.6cm}

We observe that the action model $\am$ has the following inconvenience.
If $a$ has $x > 2$ preferences to schedule the attack instead of only two, 
a natural generalization of $\am$ has $x+1$ events: $\am$ is akin to a star
with a ``central'' event modeling that $a$'s message gets lost, and one
event for each of the possible preferences of $a$.
Thus, the size of the action model is proportional to the size of $a$'s \emph{input space}. 

Can we design a smaller action model for this situation? 
Can we design an action model with only two events, one corresponding to the
case that $a$'s message gets lost and another corresponding to the case
that $a$'s message (with distinct contents, either $\sf d$ or $\sf n$) reaches $b$?
The answer is no.
It is easy to see that if we have an action model $\am'$
with only one event $h$ corresponding to $a$'s successful communication,
and unavoidably with precondition ${\sf d} \vee {\sf n} = { \top}$ 
(see $\am'$ in Fig.~\ref{fig-coordinated-attack-identified}, discarding for the
moment the set $\overline{N}(h,b) = \{a\}$),
then $\km \otimes \am'$ has again four worlds but now $b$ 
cannot distinguish 
between the worlds $(w, h)$ and $(w', h)$ corresponding to the cases where 
the communication was successful (which is incorrect).
A similar situation happens if $a$ has more than two preferences: it is impossible
to have an action model with an event that models the case that $a$'s communication is successful.
We cannot get any smaller action model in this situation (it might be possible, however, to do so in further rounds)
because the action models are designed to deal with ``interpreted'' events, namely, 
an event includes the 
information of the message encoded in its precondition, 
hence it has a limited ability to represent that \emph{some} information is sent from one agent to another.

This property of action models is a problem in some contexts.
Specifically, when studying computability in a given distributed model, 
it is often the case that the analysis is performed on \emph{protocols} 
in which agents proceed in a sequence of \emph{rounds} of communication, 
and in each round every agent 
sends \emph{all} the information it has collected so far 
to all other agents; these protocols are called \emph{full-information} in the distributed-computing literature.
To be able to reason in the style of DEL, we would like to have an action model modeling 
the communication events that can happen in a round,
and update the epistemic model with the help of the restricted modal product in each round.
Drawbacks of the approach in~\cite{goubault-ledent-rajsbaum-18} are that the size of the action models proposed there grows exponentially in the number of round and 
that such action models are structurally isomorphic to the resulting epistemic models. 
As we will see later, for the case of two agents, we have been able to find a family of action models with a constant number of actions (although the preconditions of the action model do change from round to round) but it is unclear how to find action models with this property for other cases. 

\paragraph{A glimpse of our solution.}
Coming back to our initial example, how do we ``fix'' the problem
in $M \otimes {\am'}$, i.e., that $b$ cannot distinguish between worlds  $(w, h)$ and $(w', h)$?
Our solution is based on the following observation: 
$b$ must be able to distinguish between the two worlds because
(1) it receives a message from $a$ in the event $h$ of $\am'$, and
(2) $a$ can distinguish between the $w$ and $w'$ in $M$.
Therefore, $a$ must send information that makes $b$ able to distinguish 
between the two worlds in $M \otimes \am'$.

We define an extension to the action model formalism, which equips an action model with an additional
function $\overline{N}$ that maps every pair $(e,a)$ to a set of agents. 
Intuitively, $\overline{N}(e,a)$ contains the agents that $a$ receives messages from
when the event $e$ happens. 
The restricted modal product is modified by adding 
two conditions when updating the accessibility relation of an epistemic model.
Such conditions say that an agent $a$ cannot distinguish between two
worlds $(w,e)$ and $(w',e')$ if and only if $a$ receives
messages from the same set of agents in $e$ and $e'$ (i.e.,
$\overline{N}(e,a) = \overline{N}(e',a)$) and each of these agents cannot distinguish
between  $w$ and $w'$ (namely, $\forall a' \in \overline{N}(e,a), w\sim_{a'} w'$).
The idea is that if those agents sending information to $a$ cannot distinguish between $w$ and $w'$,
then there is no information they send to $a$ making $(w,e)$ and $(w', e')$
distinguishable to $a$. 
The new product is denoted $\odot$.
Using this formalism, for the coordinated attack problem, we are able to define 
a \emph{communication pattern model} $A'$ with a \emph{single} event (called communication pattern in our context) $h$ corresponding 
to the case in which $a$'s message reaches $b$.
In $A'$, $\overline{N}(h,b)$ is set to $\{a\}$, and $\overline{N}$ is set to $\emptyset$ in any other case.
Figure~\ref{fig-coordinated-attack-identified} shows the model $A'$.
Furthermore, the action model is correct regardless of the size of $a$'s input space,
meaning that the very same action model produces the desired epistemic model 
if $M$ represents the situation that $a$ has $x > 2$ initial preferences.


\vspace{-0.4cm}
\section{Analyzing distributed computing models}
\label{sec-dc-models}
\vspace{-0.2cm}

In this section, we give some introductory definitions and fix the notation.
We assume some familiarity with basic epistemic logic. We refer to the language of multiagent epistemic logic as $\mathcal{L}_K$. Additionally, we consider a non-empty finite set of agents $\I{Ag}=\{a_1,\dots,a_n\}$ and a non-empty finite set of propositions $\I{Props}$, unless specified otherwise.
\paragraph{Our models of interest.}
We are interested in \emph{dynamic-network models}~\cite{charron-bost-schiper09,kuhn-oshman-11,nowak-schmid-winkler-19},
in which a set of $n \geq 2$ failure-free agents proceed in an infinite sequence of \emph{synchronous} rounds of communication.
Each agent is a \emph{state machine}.
In each round, the communication is specified with a \emph{communication graph},
namely a directed graph whose vertex set is $\I{Ag}$, with each edge $(a_i,a_j)$ indicating that a message from $a_i$ to $a_j$ is successfully delivered in that round.
The \emph{in-neighborhood} of an agent $a_i$ in a communication graph $G$, namely the set of agents $a_j$ such that  $(a_j,a_i)$ is an arrow in $G$, is denoted $N^-_G(a_i)$.
Let $CP_{\I{Ag}}$ denote the set with all communication graphs with vertex set $Ag$.
Thus, a dynamic-network model $\I{Adv}$ is specified with a set of infinite sequences of graphs of $CP_{\I{Ag}}$, that we call \emph{adversary}.
Intuitively, we say that an adversary $Adv$ is \emph{oblivious} if in
every round, any communication graph in a given set can happen, regardless of the communication graphs
that have happened in previous rounds.
This is formalized as follows. We say that a finite sequence $S$ of communication graphs is a
\emph{prefix} of $\I{Adv}$ if $S$ is a prefix of a sequence in $\I{Adv}$.
An adversary $\I{Adv}$ is oblivious if there exists a non-empty subset $X \subseteq CP_{\I{Ag}}$ such that
the graphs in $X$ are the prefixes of $\I{Adv}$ of length one, and
for every
finite sequence $S$ that is a prefix of $\I{Adv}$, it holds that $S \cdot G$ is a
prefix of $\I{Adv}$, for every graph $G \in X$.
Thus, an oblivious adversary is
simply specified through the set $X$ of communication graphs; we will say that $Adv = X$.

\paragraph{Protocols.} Each agent locally executes a \emph{protocol}
that specifies the messages that the agent sends in a round, depending on
the local state of the agent at the beginning of the round.
Each agent starts the computation with a private input, which is the state of the agent at the beginning of the first round.
Since we are interested in modeling how knowledge can evolve through
the computation, we assume that in every round
every agent attempts to communicate to everybody all it knows so far.
Formally, every agent locally executes the \emph{full-information}
protocol, namely,
in every round an agent sends to all other agents all the information such an agent has collected so far. 
Therefore, the full-information protocol captures all that an agent can know in an execution.
The full-information protocol is an important tool in distributed-computing computability research.

\paragraph{Executions and configurations.}
An \emph{execution} $E$ of an adversary $\I{Adv}$ is a pair $(I,S^{\infty})$, where
$I = (v_1,v_2,\dots,v_n)$ is an \emph{input vector} denoting that
agent $a_i$ starts with input $v_i$, with $v_i$ belonging to an
\emph{input space}, denoted $In$,
and $S^{\infty}$ is a sequence of $\I{Adv}$.
An \emph{$r$-execution} of $\I{Adv}$ is a pair $(I,S)$, where 
$I$ is an input vector and $S$ 
is a prefix of $Adv$ with $|S| = r$.
A \textit{configuration} $C$ is an $n$-tuple whose $i$-th position is
a local state of agent $a_i$ (thus input vectors are configurations). 
We say that $a_i$ \emph{does not distinguish} between configurations
$C$ and $C'$ if and only if $C(i) = C'(i)$.
An $r$-execution $(I,S)$ \emph{ends} at a configuration $C$
if each agent $a_i$ has the local state $C(i)$
after the execution of the sequence of communication rounds described by $S$
with the inputs stated by $I$; alternatively, we say that $C$ is the configuration at the \emph{end} of $(I,S)$.
Note that for the empty sequence, denoted $[\,]$, $I$ is the configuration at the end of the $0$-execution
$(I,[\,])$, for every input vector $I$.

\paragraph{Our representation.}

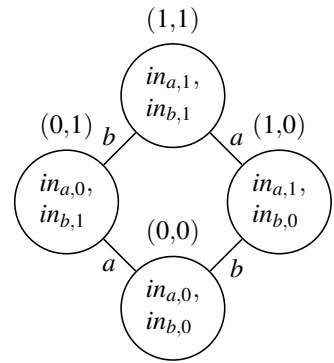
\begin{wrapfigure}{r}{0.3\textwidth}
\vspace{-35pt}
    \small
    \centering
    \begin{tikzpicture}[shorten >=0pt,node distance=2cm,on grid,auto,semithick]
        \node[align=left,state,label=$(0\mc 1)$]            (q_0)                   {$\I{in}_{a,0},$\\$\I{in}_{b,1}$};
        \node[align=left,state,label=$(1\mc 1)$]            (q_1)  [above right=of q_0]    {$\I{in}_{a,1},$\\$\I{in}_{b,1}$};
        \node[align=left,state,label=$(0 \mc 0)$]            (q_2)  [below right=of q_0] {$\I{in}_{a,0},$\\$\I{in}_{b,0}$};
        \node[align=left,state,label=$(1 \mc 0)$]            (q_3)  [below right=of q_1] {$\I{in}_{a,1},$\\$\I{in}_{b,0}$};
        \path (q_0) edge              node[above left=-3pt]        {$b$} (q_1)
                        edge              node [below left=-3pt] {$a$} (q_2)
                    (q_1) edge              node[above right=-3pt]        {$a$} (q_3)
                    (q_2) edge              node [below right=-3pt] {$b$} (q_3);
    \end{tikzpicture}
        \caption{Model $M^0$ for agents $a$ and $b$ with binary inputs.}
        \label{fig:ex0}
\vspace{-15pt}
\end{wrapfigure}
We use \emph{epistemic models}, that we name $M^r$, for representing the $r$-executions of a given adversary $\I{Adv}$.
%
%
An \emph{epistemic model} for $\I{Ag}$ and a set of propositions $\I{Props}$
is a triple $M =$ \mbox{$(W,\sim,L)$}, where $W$ is a finite set of worlds, 
$\sim\;: \I{Ag} \rightarrow \wp ( W \times W)$ assigns an equivalence relation to each agent,
and $L: W \rightarrow \wp ({\it Props})$ assigns a set of true-valued propositions to each world.
Each world in $M^r$ represents an $r$-execution and the accessibility relations represent the indistinguishability relations over the configurations at the end of the $r$-executions of $\I{Adv}$.
\paragraph{The initial epistemic model ($M^0$).}

%
%

We build the initial epistemic model $M^0=(W^0,\sim^0,L^0)$ for $\I{Ag}$ and $\I{In}$ 
with $\I{Props}=\{in_{a,v} \mid a \in \I{Ag} \land v \in \I{In} \}$ so that 
$W^0=\{I \mid I \textrm{ is an input vector for $\I{Ag}$ and $\I{In}$} \}$, 
$I\sim^0_{\I{a}_i}I'$ if and only if $I(i)=I'(i)$, and
$L(I)=\{in_{a_i,v} \in \I{Props} \mid I(i)=v \}$.
The epistemic model $M^0$ for the agents $\I{Ag}=\{a,b\}$ and binary inputs $\I{In}=\{0,1\}$ is depicted in Fig.~\ref{fig:ex0}.



\vspace{-0.3cm}
\section{Action models and the IIS model}
\vspace{-0.2cm}

In this section, we first present the IIS model.
Next, we give the definition of action models.
Finally, we exhibit our best action-model solution of modeling IIS for agents $a$ and $b$ with binary inputs.

\subsection{Iterated Immediate Snapshot distributed-computing model} \label{sec:IIS}

\begin{figure}[ht]
\small
    \centering
\begin{tikzpicture}[shorten >=0pt,node distance=2cm,on grid,auto,semithick]
    \node[align=left,state,label=$(0\mc \{b\}\{a\})$]       (q_0)                   {$\I{in}_{a,0},$\\$\I{in}_{b,0}$};
    \node[align=left,state,label=$(0\mc \{a\mc b\})$]       (q_1)   [right=of q_0]  {$\I{in}_{a,0},$\\$\I{in}_{b,0}$};
    \node[align=left,state,label=$(0\mc \{a\}\{b\})$]       (q_2)   [right=of q_1]  {$\I{in}_{a,0},$\\$\I{in}_{b,0}$};
    \node[align=left,state,label=$(1\mc \{a\}\{b\})$]       (q_3)   [right=of q_2]  {$\I{in}_{a,0},$\\$\I{in}_{b,1}$};
    \node[align=left,state,label=$(1\mc \{a\mc b\})$]       (q_4)   [right=of q_3]  {$\I{in}_{a,0},$\\$\I{in}_{b,1}$};
    \node[align=left,state,label=$(1\mc \{b\}\{a\})$]       (q_5)   [right=of q_4]  {$\I{in}_{a,0},$\\$\I{in}_{b,1}$};
    \node[align=left,state,label=below:$(3\mc \{b\}\{a\})$] (q_6)   [below=of q_5]  {$\I{in}_{a,1},$\\$\I{in}_{b,1}$};
    \node[align=left,state,label=below:$(3\mc \{a\mc b\})$] (q_7)   [left=of q_6]   {$\I{in}_{a,1},$\\$\I{in}_{b,1}$};
    \node[align=left,state,label=below:$(3\mc \{a\}\{b\})$] (q_8)   [left=of q_7]   {$\I{in}_{a,1},$\\$\I{in}_{b,1}$};
    \node[align=left,state,label=below:$(2\mc \{a\}\{b\})$] (q_9)   [left=of q_8]   {$\I{in}_{a,1},$\\$\I{in}_{b,0}$};
    \node[align=left,state,label=below:$(2\mc \{a\mc b\})$] (q_10)  [left=of q_9]   {$\I{in}_{a,1},$\\$\I{in}_{b,0}$};
    \node[align=left,state,label=below:$(2\mc \{b\}\{a\})$] (q_11)  [left=of q_10]  {$\I{in}_{a,1},$\\$\I{in}_{b,0}$};
    \path (q_0)   edge    node    {$a$} (q_1)
            (q_1)   edge    node    {$b$} (q_2)
            (q_2)   edge    node    {$a$} (q_3)
            (q_3)   edge    node    {$b$} (q_4)
            (q_4)   edge    node    {$a$} (q_5)
            (q_5)   edge    node    {$b$} (q_6)
            (q_6)   edge    node    {$a$} (q_7)
            (q_7)   edge    node    {$b$} (q_8)
            (q_8)   edge    node    {$a$} (q_9)
            (q_9)   edge    node    {$b$} (q_10)
            (q_10)  edge    node    {$a$} (q_11)
            (q_11)  edge    node    {$b$} (q_0);
\end{tikzpicture}
    \caption{Epistemic model $M_{\mathrm{IIS}}^1$ that represents the configurations at the end of the first round of the full-information protocol for two agents $a$ and $b$ with binary inputs in the IIS model.}
    \label{fig:kmIIS}
\end{figure}
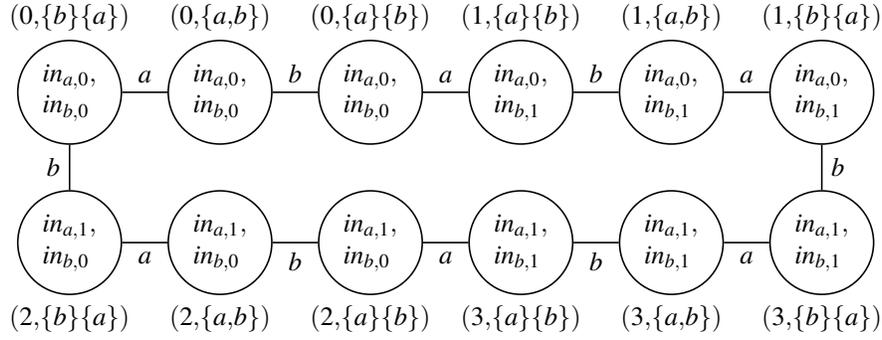

The IIS model~\cite{borowsky-1997} is a fundamental model that
fully captures what can be solved in asynchronous wait-free shared-memory systems with process-crash failures.
We can define IIS as a (failure-free) synchronous oblivious dynamic-network adversary.
The set describing the adversary is as follows.
For every sequence of non-empty subsets of $\I{Ag}$, $S=[C_1,C_2,\dots,C_k]$,  
satisfying that $Ag = \bigcup C_i$ and $C_i \cap C_j = \emptyset$ whenever $i \neq j$,
the adversary has the communication graph with a directed edge $(a,b)$ for every pair of agents $a \in C_i, b \in C_j$ with
$1 \leq i \leq j \leq k$. We say that $C_i$ is a \emph{concurrency class}.
In Fig.~\ref{fig:kmIIS}, we show the epistemic model $M_{\mathrm{IIS}}^1$ that represents the configurations at the end of the first round of the full-information protocol
 for processes $a$ and $b$ with binary input in the IIS model.
\subsection{Action models}
\label{action-models}
%
%
Action models were introduced in~\cite{baltag-lawrence-slawomir-98} as a general way to model dynamics of knowledge via events.
%


\begin{definition}[Action model]
    An \emph{action model} $\am$ is a triple $\amf{(E,R,Pre)}$, where
$\amf{E}$ is a non-empty finite set of events, $\amf{R : P \rightarrow \wp(E \times E) }$ is a function that associates each agent with a relation over the set of events, and $\amf{Pre: E \rightarrow \mathcal{L}_K}$ is a function that associates each event with a precondition.
\end{definition}


%

\begin{definition}[Syntax]
    Let $\am\amf{=(E,R,Pre)}$ be an action model over $\I{Ag}$ and $\mathit{Props}$.
    The language $\mathcal{L}_{\otimes}$ is given by the following \textit{BNF}
    $\varphi ::= p \mid \lnot \varphi \mid~\varphi \land \varphi~|~K_a\varphi~|~[(\am ,\amf{e})]\varphi$,
where $a \in \I{Ag}$, and $p \in \I{Props}$, $\amf{e} \in \amf{E}$ and $(\am ,\amf{e})$ is an \textit{update}.
\end{definition}

%


\begin{definition}[Restricted modal product]
Let $\km=(W,\sim,L)$ be an epistemic model over $\I{Ag}$ and $\mathit{Props}$.
Let $\am\amf{=(E,R,Pre)}$ be an action model.
$\km'=\kmf{(W',\sim',L')}=\km \otimes \am $ is defined as follows:

\begin{itemize}
    \item $\kmf{W}' = \{ (\kmf{w},\amf{e}) \in  \kmf{W} \times \amf{E} \mid M,w \models \amf{Pre(e)} \} $
    \item $\sim'_a\; = \{((\kmf{w},\amf{e}),(\kmf{w}',\amf{e}')) \in \kmf{W}' \times \kmf{W}' \mid w\sim_aw' \;\land\; e\;R_a\;e'\}$ 
    \item $\kmf{L}'((w,e))=\kmf{L}(w)$
\end{itemize} 
\end{definition}

\newcommand{\nsp}{\hspace*{-80pt}}

%
\begin{definition}[Semantics]
Let $M=(W,R,L)$ be an epistemic model over $\I{Ag}$ and $\mathit{Props}$. 
Let $\am\amf{=(E,R,\I{Pre})}$ be an action model.
Let $p \in \mathit{Props}$ be a proposition.
Let $w,w' \in W$ be worlds. 
Let $a \in \I{Ag}$ be an agent.
Let $\amf{e \in E}$ be an event.
Let $\varphi,\psi \in \mathcal{L}_{\otimes}$ be formulas.
\begin{align*}
M,w & \models p\nsp&\textrm{ iff }&~p \in L(w)\\
M,w & \models \lnot \varphi\nsp&\textrm{iff }&M,w \not\models  \varphi\\
M,w & \models \varphi \land \psi\nsp&\textrm{iff }&M,w \models \varphi~\textrm{and}~ M,w \models \psi\\
M,w & \models K_p\varphi\nsp&\textrm{iff }&M,w' \models \varphi~\textrm{for~all~$w'$~such~that }~w\;R(p)\;w'\\
M,w & \models [(\am,\amf{e})]\varphi\nsp&\textrm{iff }&M,w \models \I{Pre}(e) \textrm{~implies~} M \otimes \am,(\kmf{w},\amf{e}) \models \varphi
\end{align*}
\end{definition}







\subsection{Our best action-model solution for IIS}

\begin{wrapfigure}{r}{0.25\textwidth}
 \vspace{10pt}
    \centering
    \begin{tabular}{c}
$\{a\}\{b\}$\\
\begin{tikzpicture}[shorten >=0pt,node distance=1.5cm,on grid,auto,semithick,>=stealth']
  \node[state]   (q_0)                      {$a$};
  \node[state]           (q_1) [right=of q_0] {$b$};
  \path[->] (q_0) edge              node        {} (q_1);
\end{tikzpicture} \\
$\{b\}\{a\}$ \\
\begin{tikzpicture}[shorten >=0pt,node distance=1.5cm,on grid,auto,semithick,>=stealth']
  \node[state]   (q_0)                      {$a$};
  \node[state]           (q_1) [right=of q_0] {$b$};
  \path[->] (q_1) edge              node        {} (q_0);
\end{tikzpicture} \\
$\{a,b\}$ \\
\begin{tikzpicture}[shorten >=0pt,node distance=1.5cm,on grid,auto,semithick,>=stealth',bend angle=20]
  \node[state]   (q_0)                      {$a$};
  \node[state]           (q_1) [right=of q_0] {$b$};
  \path[->]
  (q_1) edge [bend left]             node        {} (q_0)
  (q_0) edge [bend left]             node        {} (q_1);
\end{tikzpicture}  \\
    \end{tabular}
    \caption{Communication graphs for two-agent IIS.}
    \label{fig:CG_IIS_2P}
 \vspace{-15pt}
\end{wrapfigure}
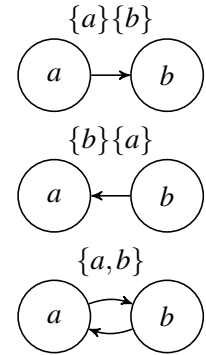




We now present our best action-model approach of modeling IIS for agents $a$ and $b$ with binary inputs.
We exploit the fact that for two-agent IIS, \emph{the epistemic models will always be bipartite graphs}.
We can hence partition the set of worlds in $M^i$ into two sets $W_1^i$ and $W_2^i$ so that any pair of distinct 
worlds in the same set can be distinguished by both agents.
For each set $W^i_j$, we use three events to represent the different sequences of concurrency classes that can happen in
a round: $\{a\}\{b\}$, $\{a,b\}$, and $\{b\}\{a\}$ (see Fig.~\ref{fig:CG_IIS_2P}).
Thus we have six events: 
three for operating with the worlds in $W^i_1$, 
and three for operating with the worlds in $W^i_2$.
The sketch of the action model is shown in Fig.~\ref{fig:am3_iis_2p_bi_r1}.
In such a sketch, the preconditions, $\phi_1$ and $\phi_2$, change from round to round. 
$\phi_j$ is a disjunction of formulas identifying the worlds in $W^i_j$.
A formula identifying a world is a conjunction of the formulas describing the local state of each agent.
In Appendix~\ref{sec:vf}, we define functions that compute an epistemic logic formula that describes the local state of an agent.
For the first round, if we consider $W_1^0=\{(0,0), (1,1)\}$ and $W_2^0=\{(0,1), (1,0)\}$, the preconditions are:
$\phi_1 = (\I{in}_{a,0} \land \I{in}_{b,0}) \lor (\I{in}_{a,1} \land \I{in}_{b,1})$, and $\phi_2 = (\I{in}_{a,0} \land \I{in}_{b,1}) \lor (\I{in}_{a,1} \land \I{in}_{b,0}).$

This approach appears to be a succinct representation of the full-information execution dynamics.
There are, however, still issues.
We would like to represent communication defined by an oblivious model \emph{just once} because the allowed communication patterns are the same regardless of the round.
All correct action models we have been able to find 
have preconditions that change from round to round. 
Moreover, the size of the formulas we get from the $\varphi$ functions grows exponentially in the number of rounds.
This suggests that in certain cases, a straightforward application of action models might not be ideal.

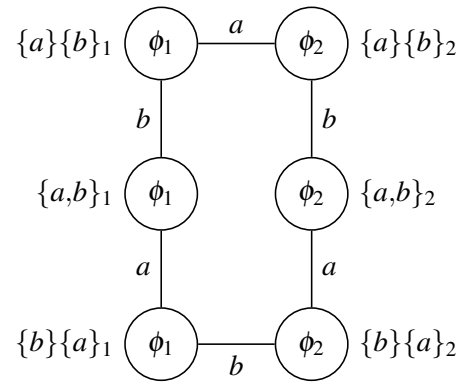
\begin{wrapfigure}{r}{0.4\textwidth}
\vspace{-15pt}
    \centering
\begin{tikzpicture}[shorten >=0pt,node distance=2cm,on grid,auto,semithick]
  \node[state,label=right:$\{a\}\{b\}_{2}$]   (q_0)                     {$ \phi_2 $};
  \node[state,label=right:$\{a\mc b\}_{2}$]           (q_1) [ below=of q_0] {$ \phi_2 $};
  \node[state,label=right:$\{b\}\{a\}_{2}$]           (q_2) [below=of q_1] {$ \phi_2 $};
  \node[state,label=left:$\{b\}\{a\}_{1}$] (q_3) [left=of q_2] {$ \phi_1 $};
  \node[state,label=left:$\{a\mc b\}_{1}$] (q_4) [above=of q_3] {$ \phi_1 $};
  \node[state,label=left:$\{a\}\{b\}_{1}$] (q_5) [above=of q_4] {$ \phi_1 $};
  \path (q_0) edge              node            {$b$} (q_1)
        (q_1)     edge              node        {$a$} (q_2)
        (q_2)          edge              node        {$b$} (q_3)
        (q_3)          edge              node        {$a$} (q_4)
        (q_4) edge              node            {$b$} (q_5)
        (q_5)          edge              node        {$a$} (q_0);
\end{tikzpicture}
    \caption{Sketch of the action model for two-agent IIS with binary inputs.}
    \label{fig:am3_iis_2p_bi_r1}
\vspace{-10pt}
\end{wrapfigure}

We have not been able to find a similar family of action models for three 
agents. 
We would need to analyze if the corresponding epistemic models are always $n$-partite, 
and how we could join all the needed events.
Finding action models for the case of $m$-ary inputs for $m \geq 3$ would be even harder.
Making things worse, the analysis might be different in distinct models:
we would need to study each model to take advantage of its own characteristics.
All these facts motivated us to look for a different and more appropriate approach.

\vspace{-0.2cm}
\section{Communication pattern models}
\vspace{-0.2cm}

Intuitively, a communication pattern model can be viewed as a non-directed graph whose vertices have two labels: a formula and a communication graph.

\subsection{Definition of communication pattern models}

First, we define our \emph{communication pattern models}.
Then, we define the syntax of our language.
After that, we define our restricted modal product.
Finally, we define our language semantics.

\begin{definition}[Communication pattern model] 
    $\cpm$ is a tuple $\cpmf{(CP,R,Pre,\overline{N})}$, where
    $\cpmf{CP}$ is a non-empty finite set whose elements are called \emph{communication patterns},
    $\cpmf{R : \I{Ag} \rightarrow \wp(CP \times CP) }$ is a function that associates each agent with an equivalence relation over the set of communication patterns,
    $\cpmf{Pre: CP \rightarrow \mathcal{L}_K}$ is a function that associates each communication pattern with a precondition, and
    $\cpmf{\overline{N}: CP \times \I{Ag} \rightarrow \wp(\I{Ag})}$ is a function that associates a (communication pattern,agent)-pair with a subset of $\I{Ag}$.
\end{definition}

We can think of communication patterns $\I{cp} \in \I{CP}$ as communication events.
The $\cpmf{\overline{N}}$ function describes the communication graph associated with a communication pattern:
$\cpmf{\overline{N}(cp,\I{a})}$ is the in-neighborhood of $a$ in such a communication graph.

\begin{definition}[Syntax]
    Let $\cpm$ be a communication pattern model over $\I{Ag}$ and $\I{Props}$.
    The language $\mathcal{L}_{\odot}$ is given by the following \textit{BNF}:
$$\varphi ::= p \mid \lnot \varphi \mid~\varphi \land \varphi~|~K_a\varphi~|~[(\cpm,\cpmf{cp})]\varphi$$
\noindent
where $p \in Props$, $a \in \I{Ag}$, $\cpmf{cp \in CP}$
and $(\cpm,\cpmf{cp})$ is an \textit{update}.
\end{definition}

%

\begin{definition}[Restricted modal product]
Let $\km=(W,\sim,L)$ be an epistemic model over $\I{Ag}$ and $\I{Props}$.
Let $\cpm \cpmf{=(CP,R,Pre,\overline{N})}$ a communication pattern model.
Let $a \in \I{Ag}$ an agent.
$(W',\sim',L') = \km' = \km \odot \cpm $ is defined as follows:

\begin{itemize}
    \item \vspace{-5pt}$W' = \{ (w,\cpmf{cp}) \in  W \times \cpmf{CP} \mid w \in W \,\land\, \cpmf{cp \in CP} \,\land\, M,w \models \I{Pre}(\cpmf{cp}) \} $
    \item \vspace{-5pt}
        $\sim'_a = \{((w,\cpmf{cp}),(w',\cpmf{cp'})) \in W' \times W' \mid w\sim_a w'\,\land\,\I{cp}\;R_a\;\I{cp}'\;\land$ \\
        \hspace*{165pt}$\underline{\cpmf{\overline{N}}(\cpmf{cp},a) = \cpmf{\overline{N}}(\cpmf{cp'},a)}\,\land$\\
        \hspace*{165pt}$\underline{w\sim_{a'} w'~\forall a' \in \cpmf{\overline{N}}(\cpmf{cp},a) }\}$
    \item \vspace{-5pt} $L'((w,\cpmf{cp}))=L(w)$
\end{itemize} 
\end{definition}

Intuitively, the first underlined condition requires agent $a$ to receive information from the same set of processes
in both $\cpmf{cp}$ and $\cpmf{cp}'$,
and the second one requires all processes in such a set to send the same information since such processes are required not to distinguish between $w$ and $w'$.

\newcommand{\ns}{\hspace*{-75pt}}
\begin{definition}[Semantics]

%
%
Let $\km=(W,\sim,L)$ be an epistemic model over $\I{Ag}$ and $\mathit{Props}$. 
Let $w,w' \in W$ be worlds. 
Let $a \in \I{Ag}$ be an agent.
Let $\cpm=\cpmf{(CP,R,Pre,\overline{N})}$ be a communication pattern model.
Let $\cpmf{cp} \in \cpmf{CP}$ be a communication pattern.
Let $\varphi,\psi \in \mathcal{L}_\odot$ be formulas.
\begin{align*}
M,w & \models p\ns&\textrm{ iff }&~p \in L(w)\\
M,w & \models \lnot \varphi\ns&\textrm{iff }&M,w \not\models  \varphi\\
M,w & \models \varphi \land \psi\ns&\textrm{iff }&M,w \models \varphi~\textrm{and}~ M,w \models \psi\\
M,w & \models K_a\varphi\ns&\textrm{iff }&M,w' \models \varphi~\textrm{for all $w'$ such that }~ w\sim_a w'\\
M,w & \models [(\cpm,\I{cp})]\varphi\ns&\textrm{ iff }& M,w \models \I{Pre}(\I{cp}) \textrm{ implies } M \odot \cpm,(w,\I{cp}) \models \varphi
\end{align*}
\end{definition}

\paragraph{Action models and communication pattern models.}
Communication pattern models are at least as general as action models.
Notice that we can build a \emph{degenerate} communication pattern model given an action model.
Let $\am \amf{=(E,R,Pre)}$ be an action model. We build a communication pattern model
$\cpm\cpmf{=(\amf{E,R,Pre},\cpmf{\overline{N}})}$ so that  $\cpmf{\overline{N}}(e,a) = \emptyset~\forall (e,a) \in \amf{E} \times \I{Ag}$.
It is easy to see that $\km \otimes \am = \km \odot \cpm$ holds.


\vspace{-0.2cm}
\subsection{Communication pattern models for arbitrary adversaries} \label{subs:cpm_adnm}

Consider any adversary $Adv$ and the initial model $M^0$ defined in Section~\ref{sec-dc-models}.
Here we define an infinite sequence $\cpm^1, \cpm^2, \hdots$ of communication pattern models
that succinctly model the evolution of knowledge in the executions of $Adv$. 
More precisely, Theorem~\ref{lemma:1} in the next section will show that
the epistemic model $\km^r=(W^r,\sim^r,L^r) = \km^0 \odot \cpm^1  \odot \cpm^2  \odot \dots  \odot \cpm^r$
captures how knowledge changes after $r$ rounds of communication.

For every $i \geq 1$, the communication pattern model $\cpm^i = \cpmf{(CP^i, R^i, Pre^i, \overline{N}^i)}$ is defined as follows:
\begin{itemize}

\item $CP^i = \{\I{cp} \in CP_{\I{Ag}} \, | \hbox{ $\exists$ an $i$-execution $(I, S \cdot \I{cp} )$ of $Adv$}\}$.

\item For every $a \in Ag$, $\cpmf{R}^i_a = \{ \cpmf{(cp,cp')} \in \cpmf{CP}^{i} \times \cpmf{CP}^{i} \mid N^{-}_{\cpmf{cp}}(a) = N^{-}_{\cpmf{cp}'}(a)\}$.

\item For every $(\cpmf{cp},a) \in \cpmf{CP}^i \times \I{Ag}$, $\cpmf{\overline{N}^{i}}(\I{cp},a) = N^{-}_{\I{cp}}(a)$.

\item For every $\I{cp} \in CP^i$,  let $\cpmf{\mathcal{W}^{i-1}_{cp}} = \{ (I,S) | \hbox{ $\exists$ an $i$-execution $(I, S \cdot \I{cp} )$ of $Adv$}\}$.
Thus, $\cpmf{Pre}^{i}(\I{cp}) = \bigvee_{(I,S) \in \mathcal{W}^{i-1}_{\I{cp}}} \varphi(I,S),$
where $\varphi(I,S) = \bigvee_{0\leq i\leq n} \varphi_i(a_i,C(i)).$ See Appendix~\ref{sec:vf} for the definition of $\varphi_i$.

\end{itemize}

\vspace{-0.3cm}
\paragraph{The case of oblivious dynamic-network models.}


\begin{wrapfigure}{r}{0.25\textwidth} 
\vspace{-10pt}
    \centering
\begin{tikzpicture}[shorten >=0pt,node distance=1.5cm,on grid,auto,semithick]
  \node[state,label=left:$\{a\}\{b\}$]   (q_0)                      {$\top$};
  \node[state,label=left:$\{a\mc b\}$]           (q_1) [below=of q_0] {$\top$};
  \node[state,label=left:$\{b\}\{a\}$]           (q_2) [below=of q_1] {$\top$};
  \path (q_0)   edge    node    {$b$} (q_1)
        (q_1)   edge    node    {$a$} (q_2);
\end{tikzpicture}
    \caption{Communication pattern model $\cpm_{\hbox{two-IIS}}$ for two-agent IIS.}
    \label{fig:am_iis_2p}
\vspace{-10pt}
\end{wrapfigure}

Following the definition of $\cpmf{CP}^{i}$, we can see that,
for any oblivious adversary $Adv$, $CP^i = Adv$, for each $i \geq 1$.
Thus, all $\cpm^{i}$ have the same set of communication patterns.
Moreover, for each $\cpmf{cp} \in \cpmf{CP}^i$, $\mathcal{W}^{i-1}_{\I{cp}}$ contains \emph{all} $(i-1)$-executions of
$Adv$, and hence $\cpmf{Pre}^{i}(\I{cp})$ can be set to $\top$.
Therefore, $\cpm^1 = \cpm^2 = \hdots$
The communication pattern model representing dynamics for IIS with agents $a$ and $b$ is depicted in Fig.~\ref{fig:am_iis_2p}. 
For clarity, the function $\overline{N}$ is not depicted; however, it can be obtained from the in-neighborhoods of the communication graphs in Fig.~\ref{fig:CG_IIS_2P}.
It is worth observing that the communication pattern in Fig.~\ref{fig:am_iis_2p}, omitting $\overline{N}$,
and the usual modal product $\otimes$ do not model IIS for two agents, not even for the first round. Namely, $M^1 = M^0 \otimes \cpm_{\hbox{two-IIS}}$ 
has ``undesirable'' pairs in agent relations which make $M^1$ structurally different from a 12-cycle,  which is the structure of the epistemic model for two processes with binary inputs after one round of communication in IIS (see Fig.~\ref{fig:kmIIS}).

%
%


\vspace{-0.2cm}
\subsection{The $\odot$ product reflects the change in local states through rounds}
\vspace{-0.2cm}

The dynamic epistemic logic that we present is focused on reasoning
about computations.
In particular, we are interested in modeling how
\emph{configurations} change in the full-information protocol.
A key point is that when updating an epistemic model with our modal product,
the resulting epistemic model models how the local states of agents change.
Theorem~\ref{lemma:1} below
states that
our communication pattern models do model knowledge dynamics.
The theorem formalizes this claim using the following notion.

Let $\I{Adv}$ be an adversary. For every $i \geq 0$,
we define the set 
$\mathcal{C}_{\I{Adv}}^i=\{C \mid \textrm{there is an $i$-execution $(I,S)$ of }$ $\textrm{$\I{Adv}$ that ends in the configuration }C\}.$
Let $\cpm^1, \cpm^2, \hdots\,$ be
an infinite sequence of communication pattern models.
We say that the sequence $\cpm^1, \cpm^2, \hdots\,$ \emph{reflects}
the adversary $\I{Adv}$ if for each $r \geq 1$, there is a bijection $f^{r}:
W^r \rightarrow \mathcal{C}_{\I{Adv}}^r$
such that $w\sim_{a_i} w'$ if and only if $a_i$ does not
distinguish between $f^r(w)$ and $f^r(w')$,
where $M^0$ is the initial epistemic model and $M^r=(W^r,\sim^r,L^r)=
M^0 \odot \cpm^1 \odot \cpm^2 \odot \dots
\odot \cpm^r$.
If $\cpm^1 = \cpm^2 = \hdots \,$, we simply say that $\cpm^1$ reflects $\I{Adv}$.

\begin{theorem}[Main result]
\label{lemma:1}
Let $\I{Adv}$ be an adversary
and $\cpm^1, \cpm^2, \hdots \,$  be the communication pattern models
built from $\I{Adv}$, as described in Subsection \ref{subs:cpm_adnm}.
Then, $\cpm^1, \cpm^2, \hdots\,$ reflects $\I{Adv}$.
\end{theorem}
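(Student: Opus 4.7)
The natural strategy is induction on $r$. For the base case $r = 0$, the initial epistemic model $M^0$ has input vectors as worlds, and $\mathcal{C}^0_{\I{Adv}}$ also consists of input vectors (configurations at the end of empty executions); the identity is a bijection, and by definition $I \sim^0_{a_i} I'$ iff $I(i) = I'(i)$, which is precisely the condition for $a_i$ not to distinguish $I$ from $I'$ as configurations.

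For the inductive step, assume $f^r : W^r \to \mathcal{C}^r_{\I{Adv}}$ is a bijection satisfying the indistinguishability condition. Given $(w, \I{cp}) \in W^{r+1}$, the precondition $\I{Pre}^{r+1}(\I{cp})$ holds at $w$, so there exists $(I,S) \in \mathcal{W}^r_{\I{cp}}$ with $w$ satisfying $\varphi(I,S)$; by the intended meaning of $\varphi(I,S)$ as a formula identifying the end-configuration of $(I,S)$, this means $f^r(w)$ is the configuration at the end of $(I,S)$, and hence $(I, S \cdot \I{cp})$ is an $(r+1)$-execution of $\I{Adv}$. I define $f^{r+1}((w,\I{cp}))$ as the configuration at the end of this execution. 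The key well-definedness lemma to check is that this configuration depends only on $f^r(w)$ and $\I{cp}$, not on the choice of $(I,S)$: this follows because, in the full-information protocol, the new local state of agent $a_i$ is determined solely by its previous local state $C(i)$, its in-neighborhood $N^-_{\I{cp}}(a_i)$, and the local states $\{C(j) : j \in N^-_{\I{cp}}(a_i)\}$.

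Next I would verify bijectivity. Surjectivity: any $C'' \in \mathcal{C}^{r+1}_{\I{Adv}}$ is the end-configuration of some $(I, S \cdot \I{cp})$, and then the $r$-execution $(I,S)$ ends at some $C = f^r(w)$; by the definition of $\I{Pre}^{r+1}$, $w$ satisfies $\varphi(I,S)$ and therefore the precondition of $\I{cp}$, so $(w,\I{cp}) \in W^{r+1}$ and $f^{r+1}((w,\I{cp})) = C''$. Injectivity: since the full-information protocol monotonically accumulates information, the resulting configuration determines both $C = f^r(w)$ (as the prior state contained in each agent's new state) and $\I{cp}$ (as the union over agents of the in-neighborhoods the agents learn they received from); hence $(w,\I{cp})$ is recoverable from $f^{r+1}((w,\I{cp}))$ using injectivity of $f^r$.

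Finally, I would match indistinguishability. Unfolding the definition of $\odot$, $(w,\I{cp}) \sim^{r+1}_{a_i} (w',\I{cp}')$ iff $w \sim^r_{a_i} w'$, $\I{cp}\, R^{r+1}_{a_i}\,\I{cp}'$ (i.e.\ $N^-_{\I{cp}}(a_i) = N^-_{\I{cp}'}(a_i)$, which also yields $\overline{N}^{r+1}(\I{cp},a_i) = \overline{N}^{r+1}(\I{cp}',a_i)$), and $w \sim^r_{a_j} w'$ for every $a_j \in N^-_{\I{cp}}(a_i)$. By the induction hypothesis, these are exactly the statements that $f^r(w)(i) = f^r(w')(i)$ and $f^r(w)(j) = f^r(w')(j)$ for each $j \in N^-_{\I{cp}}(a_i)$, together with the in-neighborhoods agreeing. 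By the description of how the full-information protocol updates $a_i$'s local state given above, this is equivalent to $f^{r+1}((w,\I{cp}))(i) = f^{r+1}((w',\I{cp}'))(i)$, i.e.\ $a_i$ does not distinguish the two end-configurations. The main obstacle I anticipate is the bookkeeping for well-definedness of $f^{r+1}$ and for injectivity: one must formalize precisely how the full-information protocol encodes messages, in-neighborhoods, and prior local states into the new local state, so that the correspondence between configuration equality and logical indistinguishability goes through cleanly in both directions.
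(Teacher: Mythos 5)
Your proposal is correct and follows essentially the same route as the paper: induction on the round number, a bijection from worlds to end-configurations that factors through $r$-executions, and a direct matching of the three conditions in the definition of $\odot$ against the full-information update rule for local states. The only differences are presentational: the paper defines the bijection globally as $f^r = g^r \circ h^r$ (reading the execution off the nested-pair structure of a world, and deferring bijectivity of $g^r$ and $h^r$ to two lemmas whose proofs are omitted), whereas you build $f^{r+1}$ inductively, recover the execution semantically from the precondition $\varphi(I,S)$, and supply inline the information-recoverability argument that is the content of those omitted lemmas.
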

Let $\mathcal{E}_{\I{Adv}}^r$ be the set of all $r$-executions of $\I{Adv}$.
Let $I,I'$ be two input vectors for $\I{Ag}$ and $\I{In}$.
%
%
The proof of Theorem \ref{lemma:1} will be as follows. 
First, we will present two lemmas whose proof we omit because of space restrictions.
Then, we will prove by induction that $w_r \sim_{a_i} w_r'$ if and only if $a_i$ does not
distinguish between $f^r(w_r)$ and $f^r(w_r')$.

%

Consider $E_{r+1}=(I,[\I{cp}_1,\I{cp}_2,\dots,\I{cp}_r,\I{cp}_{r+1}]) \in \mathcal{E}_{\I{Adv}}^{r+1}$, and
$E_{r}=(I,[\I{cp}_1,\I{cp}_2,\dots,\I{cp}_r]) \in \mathcal{E}_{\I{Adv}}^{r}$.
We define $g^r : \mathcal{E}_{\I{Adv}}^r \rightarrow \mathcal{C}_{\I{Adv}}^r$ as follows:
$$g^0((I,[\,]))=I.$$


$$g^{r+1}(E_{r+1})=C_{r+1}=(C_{r+1}(1),C_{r+1}(2), \dots ,C_{r+1}(n) )$$
where
$$C_{r+1}(i)(j)=\begin{cases}
    g^r(E_r)(j) & \textrm{if }a_j\in {N}^{-}_{\cpmf{cp}_{r+1}}(a_i) \cup \{a_i\} \\
    \bot &\textrm{otherwise}
\end{cases}.$$

\begin{lemma} \label{lemma:b_e_c}
    $g^r$ is a bijection. 
\end{lemma}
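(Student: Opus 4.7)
The plan is to establish surjectivity and injectivity of $g^r$ separately, each by induction on $r$, exploiting the fact that the recursive clause defining $g^{r+1}$ mirrors exactly the state-update rule of the full-information protocol.

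For surjectivity, I would begin by noting that, by the definition of $\mathcal{C}_{\I{Adv}}^r$, every $C$ in it is the end configuration of some $r$-execution $E = (I,S)$ of $\I{Adv}$, so it suffices to show $g^r(E) = C$. The base case $r=0$ is immediate, since the end configuration of $(I,[\,])$ is $I$ itself. For the inductive step, the full-information protocol prescribes that after round $r+1$ each agent $a_i$ holds a tuple whose $j$-th component is $a_j$'s previous local state when $a_j \in N^{-}_{\I{cp}_{r+1}}(a_i) \cup \{a_i\}$, and the placeholder $\bot$ otherwise. This matches the recursive clause of $g^{r+1}$ verbatim, so the induction closes.

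For injectivity, I would prove, again by induction on $r$, that $g^r(E)$ determines both the input vector and the entire communication-pattern sequence of $E$. The base case is trivial because $g^0$ is the identity on input vectors. For the inductive step, suppose $g^{r+1}(E_{r+1}) = C_{r+1}$ with $E_{r+1} = (I,[\I{cp}_1,\dots,\I{cp}_{r+1}])$. First, $\I{cp}_{r+1}$ can be read off from $C_{r+1}$: for each $i$, the set $\{a_j : j \neq i \text{ and } C_{r+1}(i)(j) \neq \bot\}$ coincides with $N^{-}_{\I{cp}_{r+1}}(a_i)$, which determines the directed graph $\I{cp}_{r+1}$. Second, the preceding configuration is recovered by reading $C_r(i) = C_{r+1}(i)(i)$ for every $i$, which is legitimate because $a_i$ always lies in $N^{-}_{\I{cp}_{r+1}}(a_i) \cup \{a_i\}$. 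Applying the inductive hypothesis to $C_r$ then recovers $I$ and $[\I{cp}_1,\dots,\I{cp}_r]$.

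The main obstacle I anticipate is justifying the non-$\bot$ test used to recover $\I{cp}_{r+1}$: one must know that $g^r(E_r)(j) \neq \bot$ for every $j$ and every $r$, so that a $\bot$ appearing at position $j$ in $C_{r+1}(i)$ is unambiguous evidence that $a_j \notin N^{-}_{\I{cp}_{r+1}}(a_i)$. This is a short side induction: $g^0(E_0)(j) = I(j) \in \I{In}$ is distinct from $\bot$ by assumption on inputs, and the recursive clause always produces an $n$-tuple rather than the symbol $\bot$. Once this invariant is in place, the two inductions combine cleanly to yield the bijection.
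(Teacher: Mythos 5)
The paper omits the proof of this lemma outright (``two lemmas whose proof we omit because of space restrictions''), so there is no official argument to compare yours against; judged on its own terms, your proof is correct and is almost certainly the intended one. The recursion defining $g^{r+1}$ is a verbatim copy of the view-update rule for the full-information protocol given in Appendix~\ref{sec:vf}, so your surjectivity induction is essentially definitional, and your injectivity induction --- recover $\I{cp}_{r+1}$ from the $\bot$-pattern of $C_{r+1}$, recover $g^r(E_r)$ from the diagonal entries $C_{r+1}(i)(i)$, and then apply the inductive hypothesis --- is the natural inverse construction. You are also right to single out the invariant $g^r(E_r)(j)\neq\bot$ as the one point needing care, and your side induction discharges it.

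One caveat, which concerns the paper's definitions as much as your argument: because the update rule unions $\{a_i\}$ into $N^{-}_{\I{cp}_{r+1}}(a_i)$, the configuration $C_{r+1}$ cannot record whether $\I{cp}_{r+1}$ has a self-loop at $a_i$. If the adversary contained two one-round extensions of the same prefix whose graphs differed only in self-loops, they would have the same image under $g^{r+1}$ and the lemma as stated would fail. Your recovery of $N^{-}_{\I{cp}_{r+1}}(a_i)$ as the set of non-$\bot$ off-diagonal positions therefore tacitly assumes that communication graphs are normalized with respect to self-loops (as they are, for instance, in the IIS instantiation, where every graph contains all self-loops). Stating that assumption explicitly would close the only real gap in the write-up.
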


    Consider $w_r=(\dots((I,\cpmf{cp}_1),\cpmf{cp}_2)\dots,\cpmf{cp}_r) \in W^r$.
    We define $h^r : W^r \rightarrow \mathcal{E}_{\I{Adv}}^r$ as follows:
    $$h^r(w_r)=(I,[\cpmf{cp}_1,\cpmf{cp}_2,\dots,\cpmf{cp}_r]).$$

\begin{lemma} \label{lemma:b_w_e}
    $h^r$ is a bijection. 
\end{lemma}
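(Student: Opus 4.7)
The plan is to prove Lemma~\ref{lemma:b_w_e} by induction on $r$, exploiting the fact that $h^r$ simply reads off the input vector and the communication-pattern coordinates from the nested-tuple structure of a world in $W^r$. Because $W^r$ is built incrementally by the $\odot$ product, injectivity will follow directly from the inductive hypothesis, while surjectivity will hinge on verifying the precondition at each application of $\odot$.

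For the base case $r=0$, $W^0$ is the set of input vectors and $\mathcal{E}_{\I{Adv}}^0 = \{(I,[\,]) : I \text{ an input vector}\}$, so $h^0(I)=(I,[\,])$ is obviously a bijection. For the inductive step, assume $h^r$ is a bijection. If $h^{r+1}(w_r,\I{cp}) = h^{r+1}(w'_r,\I{cp}')$, reading off the final coordinate gives $\I{cp}=\I{cp}'$ and reading off the common input and prefix gives $h^r(w_r)=h^r(w'_r)$, so the inductive hypothesis yields $w_r=w'_r$. For surjectivity, given $(I,[\I{cp}_1,\dots,\I{cp}_{r+1}]) \in \mathcal{E}_{\I{Adv}}^{r+1}$, the inductive hypothesis provides $w_r \in W^r$ with $h^r(w_r)=(I,[\I{cp}_1,\dots,\I{cp}_r])$, and $\I{cp}_{r+1} \in CP^{r+1}$ by construction of $CP^{r+1}$. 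What remains is to show $w_r \models \I{Pre}^{r+1}(\I{cp}_{r+1})$, so that $(w_r,\I{cp}_{r+1}) \in W^{r+1}$ and $h^{r+1}(w_r,\I{cp}_{r+1})$ recovers the target execution.

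The main obstacle is this precondition check. Since $(I,[\I{cp}_1,\dots,\I{cp}_r]) \in \mathcal{W}^r_{\I{cp}_{r+1}}$, the formula $\varphi(I,[\I{cp}_1,\dots,\I{cp}_r])$ is one of the disjuncts of $\I{Pre}^{r+1}(\I{cp}_{r+1})$, so it suffices to establish an auxiliary claim: for every $w_r \in W^r$ with $h^r(w_r)=(I,S)$, one has $w_r \models \varphi(I,S)$. The constituent formulas $\varphi_i$ (see Appendix~\ref{sec:vf}) describe local states using atomic propositions and, in general, $K_{a_i}$ operators whose semantics depend on the indistinguishability relations $\sim^r$. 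Consequently this auxiliary claim cannot be discharged in isolation, and I would prove it together with Lemma~\ref{lemma:b_e_c} and Theorem~\ref{lemma:1} in a single induction on $r$, using the inductive hypothesis on $\sim^r$ to evaluate the $K_{a_i}$ occurring in $\varphi_i$ at $w_r$. This intertwining of the three statements is the subtlest aspect of the whole argument; once it is set up, the bijection $h^{r+1} = (w_r,\I{cp}) \mapsto (I,[\I{cp}_1,\dots,\I{cp}_r,\I{cp}])$ falls out immediately.
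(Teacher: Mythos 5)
The paper never actually proves Lemma~\ref{lemma:b_w_e}: it is one of the two lemmas whose proofs are ``omitted because of space restrictions,'' so there is no official argument to compare yours against. Your decomposition---induction on $r$, injectivity for free from the nested-tuple structure, surjectivity reduced to checking $M^r,w_r\models \I{Pre}^{r+1}(\I{cp}_{r+1})$---is the natural one, and you correctly locate the real difficulty: the preconditions are built from the view-formulas $\varphi_k$ of Appendix~\ref{sec:vf}, which contain $K_{a_i}$ operators, so they can only be evaluated once the relations $\sim^r$ of $M^r$ are understood, which is exactly what Theorem~\ref{lemma:1} provides. A simultaneous induction over Lemmas~\ref{lemma:b_e_c} and~\ref{lemma:b_w_e} and Theorem~\ref{lemma:1} is a legitimate way out of this apparent circularity (the paper instead quietly assumes both lemmas when defining $f^r=g^r\circ h^r$).

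Two things are still missing, however. First, the auxiliary claim ``$M^r,w_r\models\varphi(h^r(w_r))$'' is where essentially all of the content of the surjectivity direction lives, and you only announce it: as written, $\varphi(I,S)$ is a disjunction over agents of the formulas $\varphi_{|S|}(a_i,C(i))$, and the second clause of $\varphi_{k+1}$ quantifies negated knowledge over all views in $\mathit{Views}^k_j$, so verifying it at $w_r$ genuinely requires the full inductive correspondence between $\sim^r$ and equality of views; stopping at ``use the inductive hypothesis on $\sim^r$'' has not yet done this work. Second, you never check that $h^{r+1}$ is \emph{well defined}, i.e., that every $(w_r,\I{cp})\in W^{r+1}$ unfolds to a genuine $(r{+}1)$-execution of $\I{Adv}$. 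For an oblivious adversary this is vacuous ($\I{Pre}^{r+1}(\I{cp})=\top$ and every graph extends every prefix), but for a general adversary it is the converse precondition property: if $M^r,w_r\models\I{Pre}^{r+1}(\I{cp})$, then $h^r(w_r)$ must lie in $\mathcal{W}^{r}_{\I{cp}}$, which requires that no disjunct $\varphi(I,S)$ with $(I,S)\in\mathcal{W}^{r}_{\I{cp}}$ can hold at a world whose execution cannot be extended by $\I{cp}$. Without this direction $h^{r+1}$ could fail to land in $\mathcal{E}_{\I{Adv}}^{r+1}$ at all, so it is just as essential to the bijection claim as the direction you treat.
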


Now, we start with the proof of Theorem~\ref{lemma:1}.

\begin{proof}
We define $$f^r:W^r \rightarrow \mathcal{C}^r = g^r \circ h^r.$$
Since $g^r$ and $h^r$ are bijective, $f^r$ is bijective.

Now we prove, by induction on the round number $r$, that the epistemic model $\km^r$ reflects indistinguishability between configurations.

\textbf{Base case.} 

Consider $I,I' \in W^0$, 
$C_I=f^0(I)=(I(1),I(2),\dots ,I(n))$, and $C_{I'}=f^0(I'=(I'(1),I'(2),\dots ,I'(n))$.
By construction of $\sim^0$, $I\sim^0_{a_i} I'$ if and only if $I(i) = I'(i)$ holds. 
Since $\I{a}_i$ does not distinguish between $C_{I}$ and $C_{I'}$ if and only if $I(i) = I'(i)$ holds, 
$I\sim^0_{a_i} I'$ if and only if $\I{a}_i$ does not distinguish between $C_{I}$ and $C_{I'}$ holds.

\textbf{Inductive hypothesis.}

Consider $\km^r=(W^r,\sim^r,L^r)=\km^0 \odot \cpm^1 \odot \cpm^2 \odot \dots \odot \cpm^r$, 
and $w_r,w_r' \in W^r$.
We assume that $f_{r}: W^r \rightarrow  \mathcal{C}_{\I{Adv}}^r$ satisfies that $w_r \sim^r_{a_i} w_r'$ if and only if $\I{a}_i$ does not distinguish between $f^r(w_r)$ and $f^r(w_r')$. 
    
\textbf{Inductive step.} 

%

Consider $w_{r+1}=(w_{r},cp_{r+1}),w_{r+1}'=(w_{r}',cp_{r+1}') \in W^{r+1}$.
We need to prove that $w_{r+1}\sim^{r+1}_{a_i} w_{r+1}'$ if and only if $a_i$ does not distinguish between $f^{r+1}(w_{r+1})$ and $f_{r+1}(w_{r+1}')$.

Consider $w_{r+1},w_{r+1}' \in W^{r+1}$.
By definition of $f^{r+1}$, we know that
$$f^{r+1}(w_{r+1})=C_{r+1}=(C_{r+1}(1),C_{r+1}(2), \dots ,C_{r+1}(n) )$$
where
$$C_{r+1}(i)(j)=\begin{cases} 
    f^r(w_{r})(j) & \textrm{if }a_j\in \overline{N}^{r+1}(\cpmf{cp}_{r+1},\I{a}_i) \cup \{a_i\}\\
    \bot &\textrm{otherwise}
\end{cases}$$ 
and
$$f^{r+1}(w_{r+1}')=C_{r+1}\,'=(C_{r+1}'(1),C_{r+1}'(2), \dots ,C_{r+1}'(n) )$$
where
$$C_{r+1}'(i)(j)=\begin{cases} 
    f^r(w_{r}')(j) & \textrm{if }a_j\in \overline{N}^{r+1}(\cpmf{cp}_{r+1}',\I{a}_i) \cup \{a_i\}\\
    \bot &\textrm{otherwise}
\end{cases}.$$ 

By the definition of $\odot$, $w_{r+1}\sim^{r+1}_{\I{a}_i}w_{r+1}'$ if and only if 
$w_{r}\sim^r_{\I{a}_i}w_{r}'$,
$\cpmf{cp}_{r+1}\;R^{r+1}_{\I{a}_i}\;\cpmf{cp}_{r+1}'$,\\
$\overline{N}^{r+1}(\cpmf{cp}_{r+1},\I{a}_i) = \overline{N}^{r+1}(\cpmf{cp}_{r+1}',\I{a}_i)$, and 
$w_r\sim^r_{\I{a}_j}w_{r}'\;\forall \I{a}_j \in \overline{\cpmf{N}}^{r+1}(cp_{r+1},\I{a}_i)$.


$C_{r+1}(i)(j)=\bot$ if and only if $C_{r+1}' = \bot$ holds because
by construction of $\mathcal{P}^{r+1}$, 
$cp_{r+1}\;R^{r+1}_{\I{a}_i}\;cp_{r+1}'$ if and only if $\overline{N}^{r+1}(cp_{r+1},\I{a}_i) = \overline{N}^{r+1}(cp_{r+1},\I{a}_i)$ holds.
Then,  $C_{r+1}(i)(j) = C_{r+1}(i)(j)$ holds if and only if $f^r(w_{r})(j) = f^r(w_{r}')(j)$ holds for all agents in $\overline{N}^{r+1}(cp_{r+1},\I{a}_i)$.
By the inductive hypothesis, we have that $f^r(w_{r})(j) = f^r(w_{r}')(j)\;\forall a_j \in \overline{N}^{r+1}(cp_{r+1},\I{a}_i)$ holds. 
Then, $w_{r+1}\;\sim^{r+1}_{\I{a}_i}\;w_{r+1}'$ holds if and only if $C_{r+1}(i)(j) = C_{r+1}'(i)(j)$.
$C_{r+1}(i)(j) = C_{r+1}'(i)(j)$ holds if and only if $C_{r+1}(i) = C_{r+1}'(i)$ holds. 
Hence, $C_{r+1}(i) = C_{r+1}'(i)$ holds if and only if $a_i$ does not distinguish between $C_{r+1}$ and $C_{r+1}'$.

%
%


\end{proof}

\begin{corollary}[Constant space]
Modeling an oblivious adversary $\I{Adv}$ with communication pattern models require constant space.
\end{corollary}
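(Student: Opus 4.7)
The plan is to observe that the result is essentially an immediate consequence of the construction in Subsection~\ref{subs:cpm_adnm} together with Theorem~\ref{lemma:1}, so the proof should simply verify that the single communication pattern model used to represent an oblivious adversary has size independent of the number of rounds executed.

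First I would recall that, as noted just after the construction of $\cpm^i$, if $\I{Adv}$ is oblivious and specified by a set $X \subseteq CP_{\I{Ag}}$, then for every $i \geq 1$ we have $CP^i = X$; the relations $R^i_a$ and the function $\overline{N}^i$ depend only on the in-neighborhoods of graphs in $X$ and on $\I{Ag}$; and for every $\I{cp} \in CP^i$ the set $\mathcal{W}^{i-1}_{\I{cp}}$ contains \emph{all} $(i-1)$-executions of $\I{Adv}$, so the precondition $\I{Pre}^i(\I{cp})$ can be taken to be $\top$. Consequently $\cpm^1 = \cpm^2 = \cdots$, and a single communication pattern model $\cpm^1$ suffices.

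Next I would bound the size of $\cpm^1$. Its set of communication patterns is $X$, a finite subset of $CP_{\I{Ag}}$ whose size is determined once the adversary is fixed and does not depend on the round index $r$. The relation $R^1_a$ is a subset of $X \times X$, the function $\overline{N}^1$ is a function $X \times \I{Ag} \to \wp(\I{Ag})$, and each precondition is the formula $\top$ of constant size. Hence the total space needed to store $\cpm^1$ is bounded by a constant that depends only on $\I{Adv}$ and $\I{Ag}$, not on the round number.

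Finally, I would invoke Theorem~\ref{lemma:1} to conclude that the single model $\cpm^1$ already reflects $\I{Adv}$, so the constant-size object $\cpm^1$ fully captures the knowledge dynamics of all rounds of $\I{Adv}$ under the full-information protocol. Since there is no hidden obstacle here, the only thing to be careful about is stating precisely that ``constant'' means constant in the round number $r$ (the adversary and agent set being fixed parameters), which I would spell out explicitly to avoid ambiguity.
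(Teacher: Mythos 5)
Your proposal is correct and follows essentially the same route as the paper: it appeals to the observation in Subsection~\ref{subs:cpm_adnm} that for an oblivious adversary $\cpm^1 = \cpm^2 = \cdots$ (with preconditions $\top$), and then invokes Theorem~\ref{lemma:1} to conclude that this single, round-independent model reflects $\I{Adv}$. Your added explicit size bound and the clarification that ``constant'' means constant in the round number are just a more detailed spelling-out of the paper's two-line argument.
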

\begin{proof}
    Let $\cpm$ be the communication pattern model for $\I{Adv}$ built as described in Subsection \ref{subs:cpm_adnm}.
    By Theorem \ref{lemma:1}, $\cpm$ reflects $\I{Adv}$. Moreover, $\cpm$ remains the same in all rounds.
\end{proof}

\section{Related work}
\label{related-work}

The formal treatment of knowledge in distributed computing was pioneered by Halpern and Moses in~\cite{halpern-moses-84}.
Perhaps their most important result is having proved
that common knowledge amounts to simultaneity.
The book by Fagin, Halpern, Moses, and Vardi~\cite{fagin-halpern-moses-vardi-95}
was pivotal, as it summarized numerous results and compared different approaches to studying many aspects of knowledge in a system of agents.









Action models first appeared in~\cite{baltag-lawrence-slawomir-98}.
Such a formalism, however, was only considered for modeling evolution of knowledge in distributed systems, as far as we know, in~\cite{goubault-ledent-rajsbaum-18}, by Goubault, Ledent, and Rajsbaum and in~\cite{pfleger-schmid-18}, by Pfleger and Schmid.

Closer to our work is~\cite{goubault-ledent-rajsbaum-18}, where the authors exhibit a tight connection between the topological approach~\cite{herlihy-kozlov-rajsbaum-14} to distributed processing and Kripke models.
A second contribution of~\cite{goubault-ledent-rajsbaum-18}
is employing the \emph{restricted modal product} operator of action models to model knowledge change between agents after a round.
A third important result 
is employing action models to represent ``tasks''.
A task is the equivalent of a function in distributed computability. 
The task defines the possible inputs to the agents, and for each set of inputs, it specifies the set of outputs that the agents may produce. 
By representing the task itself, the possibility of solving a task amounts to the existence of a certain simplicial map. 

 The objective of~\cite{pfleger-schmid-18}, which uses action models as
well, is that of obtaining lower limits on the number of bits
necessary
for implementing a protocol that is specified with an initial
epistemic model and an infinite sequence of action models that
describe how the epistemic model
is updated through an infinite sequence of  communication rounds.
Like us,~\cite{pfleger-schmid-18} uses dynamic-network models.
Unlike us,~\cite{pfleger-schmid-18} assumes that the action model are given.
As a result,~\cite{pfleger-schmid-18} does not build an action
model and is not concerned with the size of the action models.

The work in~\cite{bjorndahl-nalls-2021} exhibits drawbacks similar to the ones we found when using the action model framework in other contexts.
The authors propose an extension of epistemic models adding a function and an update mechanism.
Adding such a function decreases the number of events needed to represent certain problems.
Our proposal, however, can be directly applied to the context of distributed systems by the communication between agents.


\section{Concluding Remarks}

The formalization of knowledge in the distributed-computing literature has still to have a more significant impact.
The evidence is that many papers in distributed computing refer to knowledge informally.

At the same time, in the epistemic-logic literature, the formalism of action models has emerged as an important mechanism for modeling the evolution of knowledge.
Hence, the works by Goubault, Ledent, and Rajsbaum~\cite{goubault-ledent-rajsbaum-18}, establishing a connection between action models and a topological approach to distributed systems, and by Pfleger and Schmid~\cite{pfleger-schmid-18}, modeling a dynamic-network protocol by an initial epistemic model and an infinite sequence of action models, are relevant. 

The approach of~\cite{goubault-ledent-rajsbaum-18} operates an action model with an epistemic model capturing knowledge at a certain point in time, to obtain a new epistemic model for knowledge after one round of communication.
We observed however, that the action models proposed in~\cite{goubault-ledent-rajsbaum-18} for IIS have certain inconveniences. 
Such action models are structurally isomorphic to the desired epistemic model, hence the number of events grows exponentially in the round number.

We proposed a family of action models with six events, for the case of two agents with binary inputs, whose preconditions change from round to round.
For obtaining such a family however, we needed to know in advance the structure of the epistemic models in further rounds.
Furthermore, the analysis for more agents or even more inputs seems to be more difficult.
Hence, a generalization of such a family is unclear for IIS.
Moreover, the analysis would depend on how the epistemic models change in different distributed-computing models.

To overcome these disadvantages, we proposed an extension of action models for dealing with communication patterns, called \emph{communication pattern models}. 
Our models work for a large variety of distributed-computing models, called dynamic-network models.
Using our extension, we were able define communication pattern models systematically for every round of execution in the full-information protocol.
In the case of oblivious models, which includes IIS, the communication pattern model remains the same all through the computation.
In either case, our approach can be applied in automated distributed-systems verification.
We emphasize the fact that communication pattern models as presented in this work are designed to deal with the full-information protocol.
We plan to analyze modifying definitions to deal with arbitrary protocols.

Communication pattern models were presented as an extension of action models.
It is possible, however, to present the same idea with a set of communication graphs.
When analyzing arbitrary dynamic-network models, there should be a precondition for each communication pattern.
When analyzing oblivious models there is no need of such precondition because it is always true.
An advantage of presenting communication pattern models as an extension of action models is that of
studying how an action model can be seen in an agent-communication perspective.

An alternative approach to modeling distributed systems epistemically is by the use of \emph{interpreted} systems, as in~\cite{halpern-moses-84},
or in the more recent papers by Casta\~neda, Gonczarowski, and Moses~\cite{castaneda-gonczarowski-moses-2014}, as well as Moses~\cite{moses-2016}.
In these works, protocols are modeled explicitly, and indistinguishability is generated directly from the local states;
consequently  there is no need for a communication pattern model (or an action model) that models the dynamics of the system.
Since we use epistemic models and communication pattern models, we need to show that the indistinguishability relation that they generate coincides with the one based on local states in the corresponding model, which is shown in Theorem~\ref{lemma:1}.
A benefit of our approach, however, is that the communication pattern models that we compute are arguably a succinct representation of the communication that can occur in a model.

\section{Acknowledgment}
We should like to thank Hans van Ditmarsch for his insightful comments.

\bibliographystyle{eptcs}
\bibliography{bibliography-delbycg}

\appendix




\section{Views and epistemic formulas} \label{sec:vf}
%
Here, we first show a way of thinking about local states in distributed computing called \emph{views}.
We then give a formal way of representing such views with an epistemic-logic formula. 

In the distributing-computing literature, it is common to regard the local states of the agents as their views.
We can think of a view of an agent as a single variable whose value changes from round to round.
Such a view takes different values depending on the round.

\begin{definition}[View]
Consider $S_{k} = [\cpmf{cp}_1,\cpmf{cp}_2,\dots,\cpmf{cp}_k]$, and $S_{k+1}$ = $S_{k} \cdot \cpmf{cp}_{k+1}$ so that $(I,S_{k+1})$ is a $k+1$-execution.
The $\I{view}$ of an agent $a_i$ in a execution $(I,S)$, $\mathit{view}(a_i,(I,S))$ for short, in the full-information protocol is defined inductively as follows:
$$\mathit{view}(a_i,(I,[\,])) = I(i).$$
$$\mathit{view}(a_i,(I,S_{k+1})) = [\mathit{view}[1],\mathit{view}[2],\dots,\mathit{view}[n]]$$
where
$$\mathit{view}[j] = 
\begin{cases} 
    \mathit{view}(a_j,(I,S_{k})) &\textrm{if~}a_j\in N^{-}_{\cpmf{cp}_{k+1}} \cup \{a_i\} \\
    \bot &\textrm{otherwise}
\end{cases}
$$ 
\end{definition}

In the full-information protocol, each agent tries to communicate its whole local state to the other agents.
If $a_i$ receives a message from $a_j$, $a_i$ will know all that $a_j$ knew in the previous round,
otherwise $a_i$ will not be able to know what $a_j$ could know.

Now, we formalize the notion of views building an epistemic logic formula for the view of $a_i$.

\begin{definition}
Let $\mathit{Views}^k$ be the set of all possible views of the agents in $\I{Ag}$ at the end of the $k$-th round.
Let $\mathit{Views}^k_i$ be the set of all possible views of the agent $a_i$ at the end of the $k$-th round.
Consider $\mathit{view} = [\mathit{view}[1],\mathit{view}[2],\dots ,\mathit{view}[n]] \in \mathit{Views}^{k+1}$. 
We define the functions $\varphi_k : P \times \mathit{Views}^k \rightarrow \mathcal{L}_K$, for all $k \in \mathbb{N}\cup \{0\}$ as follows:
$$\varphi_0(a_i,v) = in_{a_i,v}.$$
where $v \in \I{In}$. 
%
%
$$\varphi_{k+1}(a_i,view) = \bigwedge\limits_{j = 1}^{n}
\begin{cases}
    K_{a_i}(\varphi_{k}(a_j,\mathit{view}[j]))& \textrm{if}~\mathit{view}[j] \neq \bot\\
    \bigwedge\limits_{\mathit{view}' \in \mathit{Views}^k_j} \lnot K_{a_i}(\varphi_{k}(a_j,\mathit{view}')) & \textrm{otherwise}
\end{cases}$$

\end{definition}





\end{document}